\newcommand{\bra}[1]{\langle #1|}
\newcommand{\ket}[1]{| #1 \rangle }
\newcommand{\ip}[2]{{\langle #1|}{ #2 \rangle }}
\newcommand{\tr}[1]{{\rm tr}[#1]}
\newcommand{\be}{\begin{eqnarray}}
\newcommand{\ee}{\end{eqnarray}}
\newcommand{\cE}{{\cal E}}
\newcommand{\cI}{{\cal I}}
\newcommand{\cF}{{\cal F}}
\newcommand{\cT}{{\cal T}}
\newcommand{\cS}{{\cal S}}
\newcommand{\cH}{{\cal H}}
\newcommand{\cR}{{\cal R}}
\newcommand{\cQ}{{\cal Q}}
\begin{document}

%\doi{10.1080/0950034YYxxxxxxxx}
\issn{1362-3044 }
\issnp{0950-0340} \jvol{57} \jnum{03} \jyear{2010} \jmonth{10 January}

\markboth{Taylor \& Francis and I.T. Consultant}{Journal of Modern Optics}

%\articletype{PAPER}

\title{Single-shot discrimination of quantum unitary processes}
\author{
M\'ario Ziman$^{a,b,\ast}$\thanks{$^\ast$Email:ziman@savba.sk},
and Michal Sedl\'ak$^{a}$
\\\vspace{6pt}
$^{a}${\em Research Center for Quantum Information, Institute of Physics, Slovak Academy of Sciences, D\'ubravsk\'a cesta 9, 845 11 Bratislava, Slovakia}
\\\vspace{6pt}
$^{b}${\em Faculty of Informatics, Masaryk University, Botanick\'a 68a, 602 00 Brno, Czech Republic}
\\\vspace{6pt}
\received{today 2009}
}

\maketitle

\begin{abstract}
We formulate minimum-error and unambiguous
discrimination problems for quantum processes in the language
of process positive operator valued measures (PPOVM). In this framework
we present the known solution for minimum-error discrimination
of unitary channels. We derive
a ``fidelity-like'' lower bound on the failure probability of
the unambiguous discrimination of arbitrary quantum processes. This bound
is saturated (in a certain range of apriori probabilities) in the case of
unambiguous discrimination of unitary channels. Surprisingly, the optimal
solution for both
tasks is based on the optimization of the same quantity called
completely bounded process fidelity.
\bigskip
\begin{keywords}
quantum statistics, minimum-error discrimination, unambiguous discrimination
\end{keywords}
\bigskip
\end{abstract}

%text

%%%%%%%%%%%%%%%%%%%%%%%%%%%%%%%%%%%%%%%%%% introduction
\section{Introduction}
%%%%%%%%%%%%%%%%%%%%%%%%%%%%%%%%%%%%%%%%%%%%%%%%%%%%%%%%%%%%%%%%%%%%%%%%%
Quantum Theory is intrinsically a statistical theory,
which means that our predictions and conclusions are typically
probabilistic (see for example \cite{helstrom,peres_book}).
For instance, even having the best possible knowledge
on the photon polarization and polarizer filter we cannot predict whether
an individual photon will pass the polarizer, or not. For us, as observers,
this event is random except for very specific cases.
Consequently, the predictive abilities of Quantum Theory are necessarily
formulated in the language of probabilities.

On the other hand, in experiments we do not meet directly with probabilities.
If the statistical samples are sufficiently large to estimate
the probabilities, our conclusions about the identities of quantum objects
could have a deterministic flavor. The remaining uncertainties are related
to potential incompleteness of the information contained
in the measured probabilities. For example, a measurement
of the $z$th component of the spin (by means of Stern-Gerlach experiment)
does not tell us almost anything about the
$x$th coordinate of the spin. However, after sufficiently many (infinitely)
repetitions the $z$th component is determined perfectly without any
uncertainty.

In this paper we shall focus on our ability to make
conclusions based on measurements repeated at most finite (small)
number of times. Our primary aim is to investigate the distinguishability of
quantum channels having access only to limited number of tests.
We shall be interested in two particular statistical tasks:
minimum-error discrimination and unambiguous discrimination. Both of them
were extensively studied in the case of states, however,
the discrimination of quantum processes is still rather
an unexplored research area. In particular,
researchers investigated the minimum error
distinguishability of unitary channels \cite{acin,dariano_unitary}.
Partial results were obtained also in the unambiguous discrimination
\cite{chefles,wang} and minimum-error discrimination of specific channels
\cite{sacchi_1,sacchi_2,watrous,sacchi_3,li,piani}.

This paper is structured as follows: In Sections I, II, and III we will
introduce the necessary concepts and mathematical tools.
The case of state discrimination
is very briefly discussed in Section IV. The Section V presents
the general framework for discrimination of channels and the discrimination
of unitary channels is analyzed in details in Section VI.

%%%%%%%%%%%%%%%%%%%%%%%%%%%%%%%%%%%%%%%%%%
\section{Description of experiments}
%%%%%%%%%%%%%%%%%%%%%%%%%%%%%%%%%%%%%%%%%%%%%%%%%%%%%%%%%%%%%%%%%%%%%%%%%
An experiment is a time ordered set of instructions that are
divided into three procedures: i) preparation, ii) processing,
iii) measurement. In quantum theory the quantum systems are associated
with Hilbert spaces and the mathematical description of quantum objects
(preparators, processes and measurements) is formulated in terms of specific
operators and structures defined on the underlying Hilbert space $\cH$.

The goal of preparations is to design a source of systems
in particular quantum states, which are represented by density operators, i.e.
positive linear operators of a unit trace. Let us denote by $\cS(\cH)$
the set of quantum states, i.e.
$\cS(\cH)=\{\varrho:\varrho\ge O,\tr{\varrho}=1\}$.
The events observed in the performed measurement are described by
positive operators $O\leq E\leq I$ called also effects. Let us note that
the positivity $A\ge O$ means that $\ip{\psi}{A\psi}\ge 0$
for all $\psi\in\cH$ and
$A\ge B$ is equivalent to positivity of $A-B\ge O$.
The probability
to observe an effect $E$ providing that the measured state was $\varrho$
is given by the relation $p=\tr{\varrho E}$. The whole measurement
is described by a collection of effects $E_1,\dots,E_n$
associated with $n$ mutually exclusive events forming a so-called
positive operator valued measure (POVM), i.e. the normalization
$\sum_j E_j=I$ holds. Thus, the observed
probability distribution of outcomes $E_1,\dots,E_n$ reads
$p_j=\tr{\varrho E_j}$.

In some cases it is convenient to include the processing part
into either the preparation, or the measurement. However, in this paper
the processes will be tested in experiments and therefore
we shall consider them as devices independent of preparators and measurements.
Mathematically, the processes are modeled as channels, i.e. completely positive
trace-preserving linear maps defined on the set of
trace-class operators $\cT(\cH)(\supset\cS(\cH))$.
In particular, a linear map $\cE:\cT(\cH)\to\cT(\cH)$ is completely
positive, if $\cI\otimes\cE[\Omega_+]\ge O$, where $\cI$ denotes the identity
map, $\Omega_+=\sum_{jk}
\ket{\varphi_j\otimes\varphi_j}\bra{\varphi_k\otimes\varphi_k}$
and $\varphi_1,\dots,\varphi_d$ is an orthonormal basis of $\cH$.
It is trace-preserving if $\tr{\cE[X]}=\tr{X}$ for all trace-class
operators $X\in\cT(\cH)$.

%%%%%%%%%%%%%%%%%%%%%%%%%%%%%%%%%%%%%%%%%%
\section{Classes of discrimination problems}
%%%%%%%%%%%%%%%%%%%%%%%%%%%%%%%%%%%%%%%%%%%%%%%%%%%%%%%%%%%%%%%%%%%%%%%%%
In the discrimination problem the goal is to design an experiment in
which an unknown quantum device (preparator, process, measurement)
is used only once (or finite number of times) and from the
observed outcome (sequence of outcomes) we want to determine
which of $N$ expected
elements ``best'' fits as the description of
the unknown device. Let us denote by $X=\{x_j\}_{j\in J}$
the possible outcomes and let $\Omega=\{\omega_1,\dots,\omega_N\}$
be the set of $N$ conclusions. The set $\Omega$ plays a dual role. It
also represents the apriori information on the identity of the
discriminated object in a sense that
we know that the unknown device is one of the elements in $\Omega$.
The conditional probability $p(x_j|\omega_k)$ gives the probability
to observe an outcome $x_j$ providing that the device is actually
described by $\omega_k$. Defining the apriori distribution
$\eta:\Omega\to [0,1]$ and using the Bayes rule we get the
conditional probability
\be
p(\omega_k|x_j)=\frac{\eta_k p(x_j|\omega_k)}{\sum_l \eta_l p(x_j|\omega_l)}
\ee
evaluating the reliability of the conclusion $\omega_k$ providing
that the outcome $x_j$ is observed. Let us note that
$p_j=\sum_l \eta_l p(x_j|\omega_l)$ is the total probability to observe
the outcome $x_j$. If $p(\omega_k|x_j)=1$ for some $\omega_k$, then
the outcome $x_j$ uniquely determines conclusion $\omega_k$. We shall
call such outcome and the related conclusion {\it unambiguous}.
In all other cases,
the conclusions are necessarily erroneous. In particular,
$1-p(\omega_k|x_j)$ is the related conditional error probability,
when we choose the conclusion $\omega_k$ for the outcome $x_j$.

We can formulate many different discrimination problems. In what follows
we shall consider two variations: minimum-error discrimination and
unambiguous discrimination. In the so-called minimum-error discrimination
\cite{helstrom}, the goal is to minimize on average the errors we made
in our conclusions. For simplicity, let us assume that the outcome $x_j$ leads
to conclusion $\omega_j$. Then the average error reads
\be
\overline{p}_{\rm error}=1-\sum_{j\in J} \eta_j p(x_j|\omega_j)\,.
\ee
In the unambiguous discrimination problem the goal is either to achieve
an unambiguous conclusion, or do not make any conclusion
\cite{ivanovic,dieks,peres}. Therefore, the conclusions,
if made, are error-free. However, not making any conclusion
results in a nonvanishing {\it failure probability}, which on average
reads
\be
\overline{p}_{\rm fail}=\sum_{k\in J_{\rm inc}}\sum_{\omega_j\in\Omega}\eta_j p(x_k|\omega_j)\, ,
\ee
where $J_{\rm inc}$ denotes the set of indices associated with inconclusive
outcomes. The aim is to minimize this quantity while satisfying the
unambiguity of conclusions.

%%%%%%%%%%%%%%%%%%%%%%%%%%%%%%%%%%%%%%%%%%
\section{Discrimination of states}
%%%%%%%%%%%%%%%%%%%%%%%%%%%%%%%%%%%%%%%%%%%%%%%%%%%%%%%%%%%%%%%%%%%%%%%%%
Discrimination problems for quantum states were investigated
from many different perspectives, but in some versions the complete solutions
are still not known. Let us briefly mention the basic results in the
minimum-error discrimination of a preparator, which is known to
produce one of the states $\varrho_1$, $\varrho_2$
with apriori probabilities $\eta_1,\eta_2$, respectively.
The statistics of the most general experiment we can perform might be
 formulated in the language of POVM, i.e.
a pair of positive operators $E_1,E_2$ such that $E_1+E_2=I$.
That is, $\Omega=\{\varrho_1,\varrho_2\}$ and $X=\{E_1,E_2\}$, where
outcome associated with $E_j$ is used to conclude $\varrho_j$.
Since the probabilities are given by the formula
$p(E_j|\varrho_k)=\tr{E_j\varrho_k}$ we get \cite{helstrom}
\be
\nonumber
\overline{p}_{\rm error}&=&
\min_{\rm POVM}\left(1-\eta_1\tr{E_1\varrho_1}-\eta_2\tr{E_2\varrho_2}\right)\\
\nonumber
&=&\frac{1}{2}\min_{\rm POVM}\left(1-\tr{(E_1-E_2)(\eta_1\varrho_1-\eta_2\varrho_2)}\right)
\\
\nonumber
&=& \frac{1}{2}(1-{\rm tr}|\eta_1\varrho_1-\eta_2\varrho_2|)
\\
&=& \frac{1}{2}(1-||\eta_1\varrho_1-\eta_2\varrho_2||_{\rm tr})
\,,
\ee
where $||\cdot||_{\rm tr}={\rm tr}|\cdot|$ is the {\it trace norm}.
The minimum is achieved for $E_1=\Pi_+$, where $\Pi_+$ is a projector
onto the eigenvectors of the operator
$\Delta=\eta_1\varrho_1-\eta_2\varrho_2$ associated with the positive
eigenvalues.

Unlike the minimum-error discrimination the unambiguous one
does not have a nontrivial solution for a general pair of states
$\varrho_1,\varrho_2$. There are cases in which the unambiguity
requirements $\tr{E_1\varrho_2}=\tr{E_2\varrho_1}=0$
cannot be satisfied. In the unambiguous discrimination we are looking for
effects $E_1,E_2$ such that $E_1+E_2\leq I$ and an effect
$I-E_1-E_2$ represents the inconclusive outcome.
In particular, the unambiguous discrimination is possible only if
the supports of $\varrho_1$ and $\varrho_2$ do not coincide.
Interestingly, if $\varrho_1$, $\varrho_2$ are apriori
equally probable pure states $\psi, \varphi$,
then $\overline{p}_{\rm fail}=|\ip{\psi}{\varphi}|$
(see for example \cite{ivanovic,dieks,peres,jaeger,rudolph}).
Although many interesting results have been discovered
 \cite{rudolph,raynal,feng,herzog,kleinmann}, we are
lacking a closed formula for the optimal value of $\overline{p}_{\rm fail}$
in the general situation.

%%%%%%%%%%%%%%%%%%%%%%%%%%%%%%%%%%%%%%%%%%
\section{Discrimination of channels}
%%%%%%%%%%%%%%%%%%%%%%%%%%%%%%%%%%%%%%%%%%%%%%%%%%%%%%%%%%%%%%%%%%%%%%%%%
In this section we shall formulate analogous discrimination problems
for quantum processes, i.e. channels. A general experiment
for probing them is described
by the so-called process POVM in the same sense as POVM describes general
experiment measuring the properties of quantum states. Process POVM
provides a compact representation of the statistics generated by the
most general experimental setup probing the properties of quantum
channels.

The framework of PPOVM exploits a specific representation of channels
defined via so-called Choi-Jamiolkowski isomorphism
\cite{pillis,jamiolkowski,choi}.
According to this theorem a channel
on $d$ dimensional system can be represented by a positive operator acting
on $d\times d$ system. In particular, a channel $\cE$ is represented
by an operator
$\Omega_\cE=(\cI\otimes\cE)[\Omega_+]$, where
$\Omega_+=\sum_{j,k} \ket{\varphi_j\otimes\varphi_j}
\bra{\varphi_k\otimes\varphi_k}$. Let us note that $\Omega_+$ is not
a projector, because it is not normalized and $\tr{\Omega_+}=d$.
The operator $\frac{1}{d}\Omega_+$ is a one-dimensional projector onto
the maximally entangled state $\psi_+=\frac{1}{\sqrt{d}}
\sum_j \varphi_j\otimes\varphi_j\in\cH\otimes\cH$.

Process POVM is defined \cite{ziman_ppovm,dariano} as a collection
of positive operators (effects)
$M_1,\dots,M_n$ such that $\sum_j M_j = \xi^T\otimes I$ for some
state $\xi\in\cS(\cH)$. An event that can be observed
in the experiment consists of
a preparation of the test state $\varrho$
and an observation of the effect $E_j$ in the measurement $E$
of the output state.
Let us note that in the experiment we are allowed to use an ancilla
of arbitrary size, i.e. $\varrho$ and $E_j$ are operators defined on
$d_{\rm anc}\times d$-dimensional Hilbert space. The conditioned
probability to observe an event consisting of
the state preparation $\varrho$ and the observation of an effect $E_j$
providing that channel $\cE$ is tested equals
\be
p(\varrho,E_j|\cE)=\tr{E_j(\cI\otimes\cE)[\varrho]}\, .
\ee
Using the Choi-Jamiokowski relation
$\varrho=(\cR_\varrho\otimes\cI)[\Omega_+]$, where
$\cR_\varrho:\cT(\cH)\to\cT(\cH_{\rm anc})$ is a completely
positive map, and the duality relation
$\tr{Y\cF[X]}=\tr{\cF^*[Y] X}$ determining the dual channel $\cF^*$
we can write
\be
\nonumber
p(\varrho,E_j|\cE)&=&\tr{(\cR_\varrho^*\otimes\cI)[E_j](\cI\otimes\cE)[\Omega_+]}\\
&=&\tr{M_j\Omega_\cE}\,,
\ee
where $M_j$ is an element of PPOVM. By definition $M_j$ is positive and
$\sum_j M_j=(\cR^*_\varrho\otimes\cI)[I]=\xi^T\otimes I$, where
$\xi={\rm tr}_{\rm anc}[\varrho]$. Thus, any experiment in which the
channel is used once can be formalized as a PPOVM and the converse
also holds \cite{ziman_ppovm}, i.e. any PPOVM can be experimentally implemented.

\subsection{Minimum-error discrimination}
The framework of PPOVM is very useful for the formulation of the discrimination
problems, because we do not have to consider all the details related to
preparation of the test states and measurements.
Let us formulate the minimum-error discrimination problem for a pair
of channels $\cE_1,\cE_2$ represented by operators $\Omega_1,\Omega_2$.
Analogously as in the case of states
the aim is to design a PPOVM (given by $M_1,M_2$)
minimizing the error probability
\be
\nonumber
\overline{p}_{\rm error}=
\frac{1}{2}\min_{\rm PPOVM}\left(1-\tr{(M_1-M_2)(\eta_1\Omega_1-\eta_2\Omega_2)}\right)\,,
\ee
where $M_1+M_2=\xi^T\otimes I$ for some state $\xi$. Although this formula
is similar to the one for the state discrimination, the optimization is
due to the freedom in the normalization of the PPOVM more complex
and not yet sufficiently understood. In fact,
$\xi^T$ cannot be $I$, because $\tr{\xi^T}=1$. Therefore,
the optimization for channels does not reduce to an optimization for states.
For instance,
pure states can be perfectly distinguished only if
they are orthogonal, however, for unitary channels the orthogonality
(with respect to the Hilbert-Schmidt scalar product)
is only a sufficient condition \cite{acin,dariano_unitary}.

For every PPOVM there exists a pure test state realization, i.e.
$M_j=\cR_\psi^*\otimes\cI[F_j]$ for some pure test state represented
by a unit vector $\psi\in\cH\otimes\cH$
and $\{F_1,F_2\}$ is a POVM defined in $\cH\otimes\cH$ system.
Expressing PPOVM elements in this way
we obtain a well-known formula for the minimum error
probability (see for example \cite{raginsky2002,sacchi_2})
\be
\nonumber
\overline{p}_{\rm error}&=&\frac{1}{2}-\frac{1}{2}\sup_{\psi,F_1,F_2}
\tr{(F_1-F_2)(\cI\otimes(\eta_1\cE_1-\eta_2\cE_2))[P_\psi]}\\
\nonumber
&=&\frac{1}{2}-\frac{1}{2}\sup_\psi
{\rm tr}|(\cI\otimes(\eta_1\cE_1-\eta_2\cE_2))[P_\psi]| \\
&=& \frac{1}{2}(1-||\eta_1\cE_1-\eta_2\cE_2||_{\rm cb})
\ee
where $||\cdot||_{\rm cb}$ is the so-called {\it norm of complete boundedness}
\cite{paulsen96} and $P_\psi=\ket{\psi}\bra{\psi}$.

A simple upper bound on this probability is given by an experiment
in which the maximally entangled state {\bf $\psi_+$} is used as the test state, i.e.
$M_j=\frac{1}{d}F_j$, where $F_j$ are effects forming the performed
POVM, hence $M_1+M_2=\frac{1}{d}I\otimes I$. The bound reads
\be
\overline{p}_{\rm error}\leq \frac{1}{2}(1-\frac{1}{d}{\rm tr}
|\eta_1 \Omega_1-\eta_2\Omega_2|)\, .
\ee
Another interesting bound comes from the experiments in which no ancilla
is used, i.e. $M_j=\ket{\psi}\bra{\psi}^T\otimes F_j$, where
$F_j$ is the POVM measurement of the output state. In such case
\be
\overline{p}_{\rm error}\leq
\frac{1}{2}(1-\max_{\psi\in\cH} ||(\eta_1\cE_1-\eta_2\cE_2)[P_\psi]||_{\rm tr})\,.
\ee

\subsection{Unambiguous discrimination}
In the case of the unambiguous discrimination the problem is formulated
by means of the following equations
\be
&\tr{M_1\Omega_2}=\tr{M_2\Omega_1}=0&\\
&\overline{p}_{\rm failure}=\min_{M_0} \tr{M_0(\eta_1\Omega_1+\eta_2\Omega_2)}&
\ee
under the PPOVM constraint
\be
M_0+M_1+M_2=\xi^T\otimes I
\ee
for some state $\xi\in\cS(\cH)$.

In the following proposition we shall formulate a lower bound
on the probability of failure, which is analogous to the bound
known for the unambiguous discrimination of two mixed states
(see for instance \cite{feng}).
\begin{proposition}
\label{prop:1}
Let $\cE_1,\cE_2$ be channels and $\eta_1,\eta_2$
be their apriori probabilities. Then
\be
\overline{p}_{\rm failure}\ge 2\sqrt{\eta_1\eta_2}\min_{\xi\in\cS(\cH)}
{\rm tr}|\sqrt{\Omega_1}(\xi^T\otimes I)\sqrt{\Omega_2}| \, ,
\ee
where $\Omega_j=(\cI\otimes\cE_j)[\Omega_+]$.
\end{proposition}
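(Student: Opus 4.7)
The proof will follow the template of the mixed-state fidelity bound for unambiguous state discrimination (e.g., the bound $\overline{p}_{\rm fail} \ge 2\sqrt{\eta_1\eta_2}\,\|\sqrt{\varrho_1}\sqrt{\varrho_2}\|_{\rm tr}$), with the twist that PPOVM normalization gives $\xi^T \otimes I$ instead of the identity. I would proceed in four steps.

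First, I would extract the algebraic content of the unambiguity conditions. Since $M_1$ and $\Omega_2$ are positive, $\tr{M_1\Omega_2}=0$ forces $\sqrt{M_1}\Omega_2\sqrt{M_1}=0$, hence $\sqrt{\Omega_2}\sqrt{M_1}=0$ and consequently $M_1\sqrt{\Omega_2}=0$. By the same reasoning $\sqrt{\Omega_1}\,M_2 = 0$.

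Second, I would use the PPOVM normalization to rewrite the quantity on the right-hand side of the proposition. Sandwiching $M_0+M_1+M_2 = \xi^T\otimes I$ between $\sqrt{\Omega_1}$ and $\sqrt{\Omega_2}$ and using the annihilation identities from the previous step gives
\begin{eqnarray}
\sqrt{\Omega_1}\,(\xi^T\otimes I)\,\sqrt{\Omega_2} \;=\; \sqrt{\Omega_1}\,M_0\,\sqrt{\Omega_2}\,.
\end{eqnarray}
This is the key structural identity; it reduces the estimate to properties of the inconclusive element $M_0$ alone.

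Third, I would apply two standard inequalities. Writing $\sqrt{\Omega_1}M_0\sqrt{\Omega_2} = (\sqrt{\Omega_1}\sqrt{M_0})(\sqrt{M_0}\sqrt{\Omega_2})$ and using the Schatten--H\"older inequality $\|AB\|_1\le \|A\|_2\|B\|_2$ yields
\begin{eqnarray}
\nonumber
\|\sqrt{\Omega_1}(\xi^T\otimes I)\sqrt{\Omega_2}\|_{\rm tr} \;\le\; \sqrt{\tr{M_0\Omega_1}}\,\sqrt{\tr{M_0\Omega_2}}\,.
\end{eqnarray}
Then the arithmetic--geometric mean inequality applied to the failure probability gives $\eta_1\tr{M_0\Omega_1}+\eta_2\tr{M_0\Omega_2}\ge 2\sqrt{\eta_1\eta_2\,\tr{M_0\Omega_1}\tr{M_0\Omega_2}}$. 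Chaining these two estimates yields $\overline{p}_{\rm failure}\ge 2\sqrt{\eta_1\eta_2}\,\|\sqrt{\Omega_1}(\xi^T\otimes I)\sqrt{\Omega_2}\|_{\rm tr}$ for the particular $\xi$ induced by the chosen PPOVM.

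Finally, since the left-hand side is independent of the PPOVM realization but the right-hand side contains a $\xi$ that is dictated by that realization, the universal lower bound is obtained by minimizing over the admissible normalizations $\xi\in\cS(\cH)$. The only subtle point, and the place I would check most carefully, is the justification of $M_1\sqrt{\Omega_2}=0$ (and its mirror) from the scalar trace conditions---everything afterwards is an essentially routine chain of H\"older plus AM--GM. No deep fact about channels (such as Stinespring dilation or properties of the cb-norm) is needed; the novelty relative to the state case is entirely absorbed into the replacement $I \leadsto \xi^T\otimes I$ and the subsequent minimization over $\xi$.
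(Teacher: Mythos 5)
Your proposal is correct and follows essentially the same route as the paper: replace $M_0$ by $\xi^T\otimes I$ inside $\sqrt{\Omega_1}(\cdot)\sqrt{\Omega_2}$ via the no-error conditions, bound the trace norm by $\sqrt{\tr{M_0\Omega_1}\tr{M_0\Omega_2}}$, apply AM--GM, and minimize over $\xi$. The only differences are cosmetic: you invoke the Schatten--H\"older inequality $\|AB\|_1\le\|A\|_2\|B\|_2$ where the paper uses Cauchy--Schwarz together with $\sup_U|\tr{XU}|={\rm tr}|X|$, and you spell out the derivation of $M_1\sqrt{\Omega_2}=0$, $\sqrt{\Omega_1}M_2=0$ from the trace conditions, which the paper states without proof.
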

\begin{proof}
Since for all numbers $a^2+b^2\ge 2ab$ and setting $a=\eta_1\tr{M_0\Omega_1}$,
$b=\eta_2\tr{M_0\Omega_2}$ we get
\be
\overline{p}^2_{\rm failure}\ge 4\eta_1\eta_2 \tr{M_0\Omega_1}\tr{M_0\Omega_2}\, .
\ee
Using the Cauchy-Schwartz inequality we obtain
\be
\nonumber
& & \tr{M_0\Omega_1}\tr{M_0\Omega_2}=
\\\nonumber & &
\quad=\tr{U\Omega_1^{\frac{1}{2}}
M_0^{\frac{1}{2}}M_0^{\frac{1}{2}}\Omega_1^{\frac{1}{2}}U^\dagger}
\tr{\Omega_2^{\frac{1}{2}}M_0^{\frac{1}{2}}M_0^{\frac{1}{2}}\Omega_2^{\frac{1}{2}}}\\
\nonumber & &\quad\geq (\tr{U\sqrt{\Omega_1}M_0\sqrt{\Omega_2}})^2\, .
\ee
By definition $M_0=\xi^T\otimes I-M_1-M_2$. Since the no-error
conditions $\Omega_1 M_2=M_1\Omega_2=O$ hold, it follows that
$\sqrt{\Omega_1}M_0\sqrt{\Omega_2}=\sqrt{\Omega_1}
(\xi^T\otimes I)\sqrt{\Omega_2}$, thus,
\be
\overline{p}_{\rm failure}\geq
2\sqrt{\eta_1\eta_2}|\tr{U\sqrt{\Omega_1}(\xi^T\otimes I)\sqrt{\Omega_2}}|\,.
\ee
Using the identity $\sup_U |\tr{XU}|={\rm tr}|X|$ holding for all
operators $X$ the inequality reads
\be
\overline{p}_{\rm failure}\geq 2\sqrt{\eta_1\eta_2}
{\rm tr}|\sqrt{\Omega_1}(\xi^T\otimes I)\sqrt{\Omega_2}|\,,
\ee
which proves the lemma after the optimalization over the PPOVM
normalization is taken into account.
\end{proof}

The function $D(\Omega_1,\Omega_2)=\min_\xi
{\rm tr}|\sqrt{\Omega_1}(\xi\otimes I)\sqrt{\Omega_2}|$
we shall call {\it completely bounded process fidelity}
in analogy with the completely bounded norm $||\cdot||_{\rm cb}$.
Let us note that both $\xi$ and $\xi^T$ are states, thus
the transposition is irrelevant in the formula for $D$.
This quantity was introduced in Ref.\cite{belavkin2005} under the name
{\it minimax fidelity} as the abstract channel analogy of the state fidelity.
Since \cite{belavkin2005}
\be
1-\frac{1}{2}||\cE_1-\cE_2||_{\rm cb}
\leq D(\Omega_1,\Omega_2)\leq \sqrt{1-\frac{1}{4}||\cE_1-\cE_2||_{\rm cb}^2}\,,
\ee
we get $2\overline{p}_{\rm error}\leq D(\Omega_1,\Omega_2)$
for $\eta_1=\eta_2=1/2$. Consequently, if the identity
$D(\cE_1,\cE_2)=\min_\xi{\rm tr}|\sqrt{\Omega_1}(\xi\otimes I)\sqrt{\Omega_2}|
=0$
holds, the channels $\cE_1,\cE_2$ can be
perfectly discriminated. Equivalently, the condition
\be
\Omega_1(\xi\otimes I)\Omega_2=O
\ee
(holding for some density operator $\xi$)
implies that the channels represented by $\Omega_1,\Omega_2$
are perfectly distinguishable, and vice versa \cite{chiribella2008}.

%%%%%%%%%%%%%%%%%%%%%%%%%%%%%%%%%%%%%%%%%%
\section{Unitary channels}
%%%%%%%%%%%%%%%%%%%%%%%%%%%%%%%%%%%%%%%%%%%%%%%%%%%%%%%%%%%%%%%%%%%%%%%%%
In this section we shall focus on the discrimination of a pair
of unitary channels. The minimum-error approach was investigated
in \cite{acin,dariano_unitary} and the unambiguous approach was
adopted by Chefles et al.
in \cite{chefles}. Unitary channels are associated with Choi-Jamiokowski
operators proportional to one-dimensional projectors. In particular,
$\cE_U=U\cdot U^\dagger$ is represented by $\Omega_U=d\ket{\psi_U}\bra{\psi_U}$,
where $\psi_U=(I\otimes U)\psi_+$. Given a pair of unitary channels $U,V$, then
the joint support of $\Omega_U,\Omega_V$ specifies a two-dimensional
subspace $\cQ$ of $\cH\otimes\cH$, which is relevant
for both discrimination problems.

\subsection{Minimum-error approach}
Evaluation of the cb-norm $||\eta_U \cE_U-\eta_V\cE_V||_{\rm cb}$
will give us the solution for the minimum-error discrimination.
Each unit vector $\psi$ can be expressed as $\psi=(A\otimes I)\psi_+$,
thus, $P_\psi=(\cR_\psi\otimes\cI)[\Omega_+]
=\frac{1}{d}(A\otimes I)\Omega_+(A^\dagger\otimes I)$. Moreover, since
the following identity holds for any pair of vectors
$\psi,\varphi$ and apriori probabilities $\eta_\psi,\eta_\varphi$
\be
\nonumber
{\rm tr}
|\eta_\psi\ket{\psi}\bra{\psi}-\eta_\varphi\ket{\varphi}\bra{\varphi}|
=\sqrt{1-4\eta_\psi\eta_\varphi|\ip{\psi}{\varphi}|^2}
\ee
we get \cite{dariano_unitary} the formula
\be
\nonumber
\overline{p}_{\rm error}&=&
\frac{1}{2}(1-||\eta_U \cE_U-\eta_V\cE_V||_{\rm cb})\\
&=&
\frac{1}{2}(1-\sqrt{1-4\eta_U\eta_V D^2})\,,
\ee
where
\be
\nonumber
D&=&
\min_{A: \tr{A^\dagger A}=d}
|\ip{(A\otimes U)\psi_+}{(A\otimes V)\psi_+}|\\
\nonumber &=&
\frac{1}{d}\min_{A}|\tr{(A^\dagger A)^TU^\dagger V}|
\\
\label{eq:D} &=&
\min_{\xi\in\cS(\cH)} |\tr{\xi U^\dagger V}|\,.
\ee
We used the identities $(A\otimes I)\psi_+=(I\otimes A^T)\psi_+$ and
$dA^\dagger A={\rm tr}_{\rm anc}\ket{(A\otimes I)\psi_+}\bra{(A\otimes I)\psi_+}
=\xi^T$, where $\xi$ denotes the reduced state of the subsystem
entering the tested quantum channel.

\subsection{Unambiguous approach}
Since supports of $\Omega_U$  and $\Omega_V$ are different,
two unitaries can be always unambiguously distinguished.
Let us denote by $Q$ a projector onto the linear subspace $\cQ$
spanned by vectors $\psi_U,\psi_V$. The unambiguous no-error conditions
require that on the relevant subspace $\cQ$ the operators
$M_U,M_V$ are rank-one and take the form
\be
M_U^\cQ&=&c_U(Q-\ket{\psi_V}\bra{\psi_V})\,,\\
M_V^\cQ&=&c_V(Q-\ket{\psi_U}\bra{\psi_U})\,.
\ee
In addition, $M_U+M_V\leq\xi^T\otimes I$ for some state $\xi$.
The success probability $\overline{p}_{\rm success}=1-\overline{p}_{\rm failure}$
reads
\be
\nonumber
& & \overline{p}_{\rm success}=\max_{\rm PPOVM}\left(
\eta_U\tr{M_U\Omega_U}+\eta_V\tr{M_V\Omega_V}\right)\\
\nonumber
& &\quad=\max_{\rm PPOVM}\left(
\eta_U\tr{M_U^\cQ\Omega_U}+\eta_V\tr{M_V^\cQ\Omega_V}\right)\\
\nonumber
& &\quad=
\max_{\varphi}\max_{\rm POVM}\left(
\bra{\varphi_U}\eta_U F_U\ket{\varphi_U}+
\bra{\varphi_V}\eta_V F_V\ket{\varphi_V}\right)
\ee
As previously, we used the fact that PPOVM can be always implemented %by
using a pure test state. This test state is associated with a suitable
vector $\varphi=(A\otimes I)\psi_+$
leading to
$M_U=(A^\dagger\otimes I)F_U(A\otimes I)$,
$M_V=(A^\dagger\otimes I)F_V(A\otimes I)$,
where effects $F_U,F_V$ represent the conclusive
outcomes of the performed POVM, i.e. $F_U+F_V\leq I\otimes I$.
We used the notation $\varphi_U=(I\otimes U)\varphi$ and
$\varphi_V=(I\otimes V)\varphi$.

For a fixed test state $\ket{\varphi}\bra{\varphi}$ the POVM
maximizing the expression
$\bra{\varphi_U}\eta_U F_U\ket{\varphi_U}+
\bra{\varphi_V}\eta_V F_V\ket{\varphi_V}$
is known from the analogous problem of unambiguous pure
state discrimination \cite{jaeger,rudolph}. Without loss of generality
we can assume that $\eta_U\ge\eta_V$. In such case the optimal
POVM consists of effects
\be
\nonumber
F_U=\min\left\{\frac{1-\sqrt{\frac{\eta_V}{\eta_U}}|\ip{\varphi_U}{\varphi_V}|}{1-|\ip{\varphi_U}{\varphi_V}|^2},1\right\}
(Q_\varphi-\ket{\varphi_V}\bra{\varphi_V})\,,\\
\nonumber
F_V=\max\left\{\frac{1-\sqrt{\frac{\eta_U}{\eta_V}}|\ip{\varphi_U}{\varphi_V}|}{1-|\ip{\varphi_U}{\varphi_V}|^2},0\right\}(Q_\varphi-\ket{\varphi_U}\bra{\varphi_U})\,,
\ee
where $Q_\varphi$ is a projector onto the subspace spanned by
vectors $\varphi_U,\varphi_V$. The failure probability
$\overline{p}_{\rm failure}=1-\overline{p}_{\rm success}$
reads
\begin{equation}
\nonumber
\overline{p}_{\rm failure}=
\left\{
\begin{array}{ll}
2\sqrt{\eta_U\eta_V}D\quad &
{\rm if}\  D\leq \sqrt{\frac{\eta_V}{\eta_U}}\leq 1
\\
\eta_V+\eta_U D^2 &
{\rm if}\  D\geq \sqrt{\frac{\eta_V}{\eta_U}}\leq 1
\end{array}
\right.\,,
\end{equation}
where we used the definition $D=\min_{\varphi}|\ip{\varphi_U}{\varphi_V}|
=\min_{\xi\in\cS(\cH)}|\tr{\xi^TU^\dagger V}|$ coinciding with Eq.(\ref{eq:D}).

Let us note that the considered unambiguous discrimination of unitary
channels saturates the bound specified in Proposition
\ref{prop:1} for values $D\leq \sqrt{\frac{\eta_V}{\eta_U}}$.
Indeed, since ${\rm tr}|X|={\rm tr}\sqrt{X^\dagger X}$ and
$\sqrt{\Omega_U}=\sqrt{d}\ket{\psi_U}\bra{\psi_U}$,
$\sqrt{\Omega_V}=\sqrt{d}\ket{\psi_V}\bra{\psi_V}$
the bound gives
\be
\nonumber
\overline{p}_{\rm failure}&\ge& 2\sqrt{\eta_U\eta_V}\min_\xi
{\rm tr}|\sqrt{\Omega_U}(\xi^T\otimes I)\sqrt{\Omega_V}|\\
\nonumber
&=& 2d\sqrt{\eta_U\eta_V}\min_\xi
|\ip{\psi_U}{(\xi^T\otimes I)\psi_V}| \tr{\ket{\psi_U}\bra{\psi_U}}\\
\nonumber &=&
2\sqrt{\eta_U\eta_V}\min_\xi |\tr{\xi^TU^\dagger V}|\\
\nonumber &=& 2\sqrt{\eta_U\eta_V}D\,,
\ee
where we used the identity
$\ip{\psi_U}{(\xi^T\otimes I)\psi_V}=\frac{1}{d}\tr{\xi^TU^\dagger V}$.
If $D\geq\sqrt{\frac{\eta_V}{\eta_U}}$, then
$\overline{p}_{\rm failure}=\eta_V+\eta_UD^2>2\sqrt{\eta_U\eta_V}D$, because
$(\sqrt{\eta_V}-D\sqrt{\eta_U})^2\ge 0$. We see that this bound is not
achievable in general. In fact, the existence of the
PPOVM giving the bound is not guaranteed in its derivation.
The particular process discrimination problem could pose additional
constraints on the possible choices of the normalization $\xi^T\otimes I$,
which makes the value of $D$, hence also the bound, different.

\subsection{Evaluation of $D$}
It follows that the optimal solutions of both discrimination
problems for unitary channels is based on minimalization of the same
quantity $D$, which is called completely bounded process fidelity.
This quantity was also analyzed in the study of perfect
discrimination of unitary channels \cite{acin,dariano_unitary}
and we will repeat the analysis.
Let us denote by $\{\phi_k\}$ the eigenvectors of $U^\dagger V$
associated with eigenvalues $e^{i\theta_k}$. Then
\be
D=\min_{\xi\in\cS(\cH)} |\sum_k e^{i\theta_k} \bra{\phi_k}\xi\ket{\phi_k}|\,.
\label{ddefhull}
\ee
The number on the right hand side is a convex combination of complex square
roots of unity. Thus, it can be
visualized as an element of the convex hull of points
(eigenvalues of $U^\dagger V$) on the
unit circle of the complex plane. Our aim is to find the complex number
within this convex hull which is closest to zero. In particular,
if 0 is not contained in the convex hull, then
\be
D=\frac{1}{2}\min_{k,l}|e^{i\theta_k}+e^{i\theta_l}|\, ,
\ee
which means a suitable test state has only two nonvanishing entries (equal to
1/2) on the diagonal of its reduced state $\xi$ (see Figure \ref{fig}).

\begin{figure}
\begin{center}
\includegraphics[width=7cm]{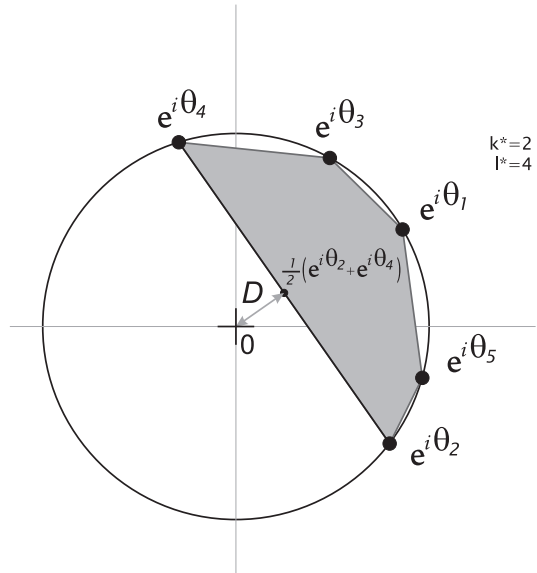}
\caption{Illustration of the completely bounded
process fidelity $D$ for  unitary channels in the case,
when the convex hull of eigenvalues of $U^\dagger V$ does not contain 0.}
\end{center}
\label{fig}
\end{figure}

Since for two-dimensional Hilbert space the unitary operators
have only two eigenvalues, the minimalization is trivial \cite{acin} and
reads
\be
D=\frac{1}{2}|e^{\i\theta_1}+e^{i\theta_2}|=\frac{1}{2}|\tr{U^\dagger V}|\, .
\ee
Hence in this case the orthogonality in the Hilbert-Schmidt sense
is necessary and sufficient for perfect discrimination of $\cE_U$
and $\cE_V$. Moreover, the maximally entangled state (for which
$\xi=\frac{1}{2}I$) is the universal test state optimizing the
minimum-error and unambiguous discrimination. Of course, the
measurements depend on the particular task and the unitaries.
However, these properties do not hold in the higher dimensions.
For example, CNOT and SWAP gate can be perfectly discriminated
even without being orthogonal and the maximally entangled
test state is not very usable.

The minimum in the definition of $D$ (see Eq.(\ref{ddefhull}))
depends only on the diagonal entries of $\xi$, thus we can always
choose optimal $\xi$ to be a pure state.
That is, no ancilla is needed in order to implement
an optimal discrimination experiment.
Formally, the optimal test state
can be chosen to be factorized $\psi=\psi_A\otimes\psi_S$, where
$\psi_A$ is arbitrary and $\psi_S$ is the pure test state
with suitable diagonal elements $|\ip{\phi_k}{\psi_S}|^2$.
Let us assume that $k^*,l^*$ are indexes of the eigenvalues optimizing the
average error probability. Then,
$\psi_S=\frac{1}{\sqrt{2}}(\varphi_{k^*}+\varphi_{l^*})$ is the vector associated
with an optimal test state. For any apriori probabilities $\eta_U, \eta_V$ this single test state is optimal for both minimum error and unambiguous discrimination. The optimal experiments for these tasks differ in the used measurements, which depends also on the apriori probabilities.

%%%%%%%%%%%%%%%%%%%%%%%%%%%%%%%%%%%%%%%%%%
\section{Conclusion}
%%%%%%%%%%%%%%%%%%%%%%%%%%%%%%%%%%%%%%%%%%%%%%%%%%%%%%%%%%%%%%%%%%%%%%%%%
The discrimination of quantum devices provides us with a clear operational
definition of their closeness. Apart from this purely mathematical motivation,
the discrimination problems naturally appear in various communication
and computation problems. In this paper we formulated the minimum-error
and unambiguous single-shot discrimination among two quantum processes
using the language of PPOVM. In this framework we can clearly see the
differences between the discrimination tasks for states
and for processes. Many of the results derived for states can be
translated to channels, however,
there are also some significant differences. As for example, the perfect
distinguishability of pure states and unitary channels
\cite{acin,dariano_unitary}. For the minimum-error approach
the trace norm is replaces by completely bounded norm, which
is not that easy to evaluate in general \cite{rosgen,watrous09}.
We derived a simple lower bound on the probability of failure for unambiguous
discrimination of quantum channels
\be
\overline{p}_{\rm failure}\ge 2\sqrt{\eta_U\eta_V} D(\Omega_1,\Omega_2)\,.
\ee
This bound suggests a state fidelity equivalent for channels called completely
bounded process fidelity
\be
D(\Omega_1,\Omega_2)=\min_\xi{\rm tr}|\sqrt{\Omega_1}
(\xi\otimes I)\sqrt{\Omega_2}|\,,
\ee
where $\Omega_j=(\cI\otimes\cE_j)[\Omega_+]$ are the Choi-Jamiolkowski
operators associated with the channels $\cE_j$. Let us remind that
in the case of minimum-error discrimination the optimal
value of error probability equals
\be
\overline{p}_{\rm error}=\frac{1}{2}(1-||\eta_1\cE_1-\eta_2\cE_2||_{\rm cb})\,.
\ee

For unitary channels we have shown that both discrimination problems
reduce to the optimization of the same quantity.
Moreover, in this case the lower bound on the probability of failure
is saturated. In particular, for equal apriori probabilities
$\eta_U=\eta_V=1/2$ we have
\be
\overline{p}_{\rm error}&=&\frac{1}{2}(1-\sqrt{1-D^2})\,,\\
\overline{p}_{\rm failure}&=&D\,,
\ee
where
\be
D=D(\cE_U,\cE_V)=\min_\xi {\rm tr}|\xi U^\dagger V|\, .
\ee
Interestingly, no ancilla is required and the same pure test states
optimizes both probabilities simultaneously. The difference is
in the measurement performed on the channel output.

A lot of work remains to be done in the area of process discrimination
and identification. We believe that a better understanding to distinguishability
of general quantum processes is related to the development of the
theory of PPOVM, which currently serves as a useful tool for numerical
optimization. However, to get a deeper understanding of discrimination problems
it seems crucial to be able to characterize those PPOVMs that are compatible
with the given constraints.

\subsection*{Acknowledgments}
We acknowledge financial support via the European Union projects QAP
2004-IST-FETPI-15848, HIP FP7-ICT-2007-C-221889, and via the projects
APVV-0673-07 QIAM, VEGA-2/0092/09, OP CE QUTE ITMS NFP 262401022,
and CE-SAS QUTE.

%%%%%%%%%%%%%%%%%%%%%%%%%%%%%%%%%%%%%%%%%%%%%%%%%%%%%%%%%%%%%%%%%%%%%%%%%
%%%%%%%%%%%%%%%%%%%%%%%%%%%%%%%%%%%%%%%%%%%%%%%%%%%%%%%%%%%%%%%%%%%%%%%%%

\end{document}